\begin{document}

\title{FPTAS for barrier covering problem with equal circles in 2D}
\titlerunning{FPTAS for barrier covering}
\author{Adil Erzin\inst{1,2} \and Natalya Lagutkina \inst{2}}
\authorrunning{Adil Erzin et al.}
\institute{Sobolev Institute of Mathematics, Novosibirsk, Russia,\\
\email{adilerzin@math.nsc.ru}, \and Novosibirsk
State University, Novosibirsk, Russia,\\
\email{lagutnat@yandex.ru}}

\maketitle

\begin{abstract}
In this paper, we consider a problem of covering a straight line segment by equal circles that are initially arbitrarily placed on a plane by moving their centers on a segment or on a straight line containing a segment so that the segment is completely covered, the neighboring circles in the cover are touching each other and the total length of the paths traveled by circles is minimal. The complexity status of the problem is not known.  We propose a $O(n^{2+c}/\varepsilon^2)$--time FPTAS for this problem, where $n$ is the number of circles and $c>0$ is arbitrarily small real.

\keywords{barrier coverage, mobile sensors, FPTAS}
\end{abstract}

\section{Introduction}
Wireless sensor networks (WSNs) consist of autonomous sensors, the operation time of which depends on the capacity of the batteries. The task of the WSN is to collect and transfer data. The time during which the network performs these functions is called a \emph{WSN's lifetime}. Energy efficient functioning of the WSN allows extending the lifetime. One of the important applications of the WSN is the monitoring of extended objects (borders, roads, pipelines, perimeters of buildings, etc.), which are often called the \textit{barriers}.

Monitoring can be carried out using both stationary \cite{Chen:07,Chen:08,Fan:14,Wu:17,Zhang:16,Zhao:18} and mobile sensors \cite{Andrews:17,Bar:15,Benk:15,Bhattacharya:08,Carmi:17,Cherry:17,Czyzowicz:10,Er:18,Fan:14,Saipulla:10,Thom:17}. The problem of energy efficient covering a segment with stationary sensors with adjustable circular sensing areas initially located on a segment is considered in \cite{Fan:14,Zhang:16}. In \cite{Fan:14} it is proved that such a problem is NP-hard.

The authors of \cite{Zhang:16} consider the formulation when the sensors are located at a short distance from the segment. It is proved that the problem is also NP-hard, and FPTAS with time complexity equals $O(n^3/\varepsilon^2)$ is proposed for it, where $n$ is the number of sensors and $0<\varepsilon<1$ is the accuracy.

In general, the initial arrangement of the sensors can be arbitrary. If the sensor is mobile, then the moving energy consumption is directly proportional to the length of the path traveled. In this case, the problem of barrier monitoring by mobile sensors with circular coverage areas is to move the sensors so that each barrier point is in the coverage area of at least one sensor, and the total length of the movement paths of the sensors involved in the covering is minimal \cite{Er:18}. This problem is usually denoted as MinSum \cite{Cherry:17}.

In the literature, two main classes of barrier monitoring problems with mobile devices are considered. The first class is a one-dimensional (1D) problem, when the sensors are initially on the line containing the segment to be covered \cite{Andrews:17,Bar:15,Benk:15,Carmi:17,Czyzowicz:10,Fan:14,Thom:17}. In the two-dimensional (2D) case, the problem is considered when the sensors are arbitrarily located on the plane \cite{Bhattacharya:08,Cherry:17,Er:18}. In \cite{Czyzowicz:10} it is proved that even for the one-dimensional case when the radii of the circles are different, the MinSum problem is NP-hard. In the case when the radii of the circles coincide, a $O(n^2)$--time algorithm is proposed.

The complexity status of the MinSum problem of covering the barrier with equal circles in 2D is open. At the same time, a $O(n^4)$--time algorithm for constructing an optimal solution in the $L_1$ metric, which is a $\sqrt{2}$-approximate solution in the Euclidean metric, is proposed in \cite{Cherry:17}. This solution has the property of preserving the order when the initial order of the circles involved in the covering after moving to the barrier does not change. This result was improved in \cite{Er:18}, where an algorithm for constructing an order preserving cover (OPC) was proposed, having the complexity $O(n^2)$.

The \cite{Benk:15} presents the results for 1D problems when the circles are different, and the centers of circles are initially outside the barrier. In the case when the sensors lie on one side of the segment, a $O(n^5/\varepsilon^2)$--time FPTAS is proposed, and for the case when the sensors are located on both sides of the segment, a $O(n^7/\varepsilon^3)$--time FPTAS is proposed. If the radii of the circles coincide, the problem is solved with  $O(n\log_2 n)$--time complexity \cite{Andrews:17}, which is an improvement of the result in \cite{Czyzowicz:10}.

The authors of the work \cite{Bhattacharya:08} consider the problem of covering a circle with mobile sensors with equal circular coverage areas. A PTAS with $O(n^4/\varepsilon)$--time complexity is proposed for the case when the circle is covered with equal evenly spaced disks. This result is improved in \cite{Carmi:17}, where a $O(n^{2+\varepsilon^{\prime}}/\varepsilon^{O(1)})$--time is proposed ($\varepsilon^{\prime}>0$).

In \cite{Bar:15}, the barrier is covered with circles, initially located on the line containing the barrier. This takes into account that the sensors consume energy on both movement and monitoring, and two problems are considered: (a) when the radii of the sensors are fixed, (b) when the radii of the sensors can be adjusted. In case (b), it is required to find the optimal radius for each circle, which is the sensor's coverage area. For case (b), a $O(n^5/\varepsilon^{4(1+\frac{1}{\alpha})})$--time FPTAS is proposed, where $\alpha\geq 1$. For the case of $\alpha=1$ in \cite{Fan:14}, a $O(n^2/\varepsilon)$--time FPTAS is proposed. These results are improved in \cite{Thom:17}, which considers the 1D problem of covering a segment with different circles, each of which has its own weight. A weighting factor was added in order to get closer to reality, when different sensors may need different amounts of energy for their work (for example, due to the wear and obsolescence of some of them). A $O(n^3/\varepsilon^2)$--time FPTAS is proposed for the case when the sensors are initially located on one side of the segment to be covered. For the general case, when the sensors are initially located on both sides of the segment, a $O(n^5/\varepsilon^3)$--time FPTAS is proposed. In this paper it was proved that the weighted 1D problem of covering a segment with different circles is NP-hard even when the sensors are initially located on one side of the barrier.

In this paper, the problem of covering the barrier, which is represented by a straight line segment, and mobile sensors -- by the centers of circles on a plane, is considered. We assume that the initial location of the sensors is known and they all have the same circular coverage areas. It is required to move the centers of the circles to a segment or a line containing a segment so as to cover the segment with \emph{touching} circles and the total length of the paths of movement of the sensors would be minimal. For this problem, we propose a $O(n^{2+c}/\varepsilon^2)$--time FPTAS, where $c>0$ is arbitrary small real. As far as we know, this is the first FPTAS for the problem under consideration.

The paper is organized as follows. Section 2 presents the mathematical formulation of the problem under consideration. Section 3 is devoted to the description of the FPTAS scheme and the proof of the complexity of the scheme. Section 4 summarizes this work.

 \section{Problem Formulation}
Let a \emph{barrier} in the form of a straight line segment of length $L>0$ and a set $S$ ($|S|=n$) of arbitrarily arranged equal circles (disks) be given on the plane. It is required to cover the barrier by moving the centers of the circles on the barrier or on the line containing the barrier. We introduce a coordinate system so that the segment is located on the x-axis between the points $(0,0)$ and $(L,0)$. If a point belongs to a barrier, then for simplicity of notation we will omit its ordinate, if it doesn't lead to confusion. Let the points on the plane $p_i=(x_i,y_i)$ be the initial coordinates of the centers of the circles, and $r_i= 1$ be the radius of the circle $i\in S$. Enumerate the circles from left to right according to the values $x_i$, $i=1,\ldots,n$.

\begin{definition}
The function $\hat{p}: S\to R^2$ is called a \emph{cover} if each point of the segment is within the coverage area of at least one sensor when the final position of the sensors are $\hat{p}_i=(\hat{x}_i,\hat{ y}_i)$, $i\in S$.
\end{definition}

Not all sensors need to be involved in the coverage. Let $C\subseteq S$ be a subset of sensors that participate in covering the barrier.

\begin{definition}
The function $\hat{p}$ is called an \emph{order preserving cover} (OPC) if $\hat{x}_i <\hat{x}_j$ if and only if $i<j$ for all $i,j\in C$.
\end{definition}

Let $d(p_i,\hat{p}_i)$ is the distance between the points $p_i$ and $\hat{p}_i$. Then the MinSum problem is to search for the function $\hat{p}^*$, which is the solution to the problem

\begin{equation}\label{eq:1}
cost(\hat{p}^*) =\min\limits_{\hat{p}} cost(\hat{p})= \min\limits_{\hat{p}}\sum\limits_{i=1}^n d(p_{i},\hat{p}_{i}).
\end{equation}

In the general case, when the radii of the circles are different, the problem (\ref{eq:1}) is NP-hard \cite{Czyzowicz:10}. In the case of identical circles, the complexity status of the problem is unknown \cite{Cherry:17}.

In this paper, we consider a MinSum problem with identical circles, whose centers, except, possibly, the centers of the left and right circles, need to be moved to a barrier in such a way that the neighbouring circles touch each other in the cover. The centers of the left and right circles in the covering can be moved on the extension of the segment, so that they touch the adjacent circles.

\begin{remark}
The requirement that the circles touch each other is a special case of a uniform placement of sensors. If we require that the distance between adjacent sensors be the same, but less than the diameter of the circle, then we obtain a covering not of the line, but of the strip.
\end{remark}

In the next section, we propose FPTAS for the MinSum problem.

 \section{FPTAS}
So, the radius of all circles is 1. Then in the required coverage, all circles, except perhaps the leftmost and rightmost, cover segments of length 2.

\begin{theorem}
For the problem (\ref{eq:1}) there is a $O(n^{2+c}/\varepsilon^2)$--time FPTAS, where $c>0$ is arbitrary small real.
\end{theorem}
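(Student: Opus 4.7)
The plan is to exploit the essentially one-parameter structure of an OPC touching cover: once the number $k$ of circles used and the $x$-coordinate $c_1$ of the leftmost center are fixed, the remaining $k-1$ centers are forced to $(c_1+2(j-1),0)$ for $j=2,\dots,k$, and the feasibility requirement that the leftmost and rightmost circles cover the endpoints of the barrier confines $c_1$ to an interval of length at most $2$. For each fixed pair $(k,c_1)$, the best OPC assignment of $k$ of the $n$ input circles to the $k$ target slots is an ordered-selection problem solved by the standard DP
\[
g(i,j)=\min\bigl\{\, g(i-1,j),\ g(i-1,j-1)+d\bigl(p_i,(c_1+2(j-1),0)\bigr)\,\bigr\},
\]
with $g(i,0)=0$ and $g(0,j)=+\infty$ for $j\ge 1$. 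A single $n\times n$ pass computes this in $O(n^2)$ time and yields $g(n,k)$ for every feasible $k$ simultaneously.

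To turn this into an FPTAS I would first invoke the $O(n^2)$-time OPC algorithm of \cite{Er:18} to obtain an upper bound $U$ on $\mathrm{OPT}$ with $U\le\alpha\cdot\mathrm{OPT}$ for a universal constant $\alpha$. A translation of $c_1$ by $\delta$ shifts every target by $\delta$ and hence perturbs the cost of any $k$-circle cover by at most $k\delta\le n\delta$, so a grid of spacing $\delta=\Theta(\varepsilon U/n)$ has only $O(n/\varepsilon)$ candidate values and captures the optimal $c_1$ with additive error $\varepsilon U\le\varepsilon\alpha\cdot\mathrm{OPT}$; renaming $\varepsilon\leftarrow\varepsilon/\alpha$ this yields a $(1+\varepsilon)$-approximation. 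Running the DP once per grid value immediately gives the baseline bound $O(n^{3}/\varepsilon)$.

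The main obstacle is matching the sharper bound $O(n^{2+c}/\varepsilon^{2})$ claimed in the theorem. To close this gap I would attempt one of two refinements: (i) accelerate the per-$c_1$ DP below $O(n^{2})$ by proving a (concave) Monge property of the Euclidean cost matrix whose rows correspond to the sorted inputs $p_i$ and whose columns correspond to the equispaced colinear targets $(c_1+2(j-1),0)$, then apply SMAWK-style search to reduce the per-value cost to $O(n^{1+c})$; or (ii) replace the single fine grid of $O(n/\varepsilon)$ values by a two-level scheme in which a coarse pass of $O(1/\varepsilon)$ candidates localizes the optimal $c_1$ to an $O(\varepsilon)$-subinterval and a fine pass of $O(1/\varepsilon)$ candidates within that subinterval completes the search, bringing the total to $O(1/\varepsilon^{2})$ evaluations at the price of an $n^{c}$ auxiliary overhead for sorting or nearest-neighbour queries. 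The real crux is establishing the structural property (Monge or local convexity) underlying whichever refinement is chosen; the discretization-error analysis itself is then routine.
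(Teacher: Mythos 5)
There is a genuine gap, and it is the central one: your entire scheme searches only over order-preserving covers. For a \emph{fixed} set of equispaced collinear targets and input points in the plane with Euclidean distances, the optimal assignment need not respect the left-to-right order of the inputs. Take $p_1=(0,10)$, $p_2=(0.1,0)$ and targets $t_1=(0,0)$, $t_2=(2,0)$: the order-preserving assignment costs $10+1.9=11.9$, while the order-violating one costs $\sqrt{104}+0.1\approx 10.3$. So your DP value $g(n,k)$ is not the optimum for that $(k,c_1)$, and no amount of grid refinement on $c_1$ repairs a constant-factor loss; the result is at best a constant-factor approximation, not a $(1+\varepsilon)$-approximation. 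This is exactly the difficulty the paper is built around (the OPC-vs-OPT gap is why the 2D problem's complexity is open): for each candidate first-cell length $\Delta$ the paper solves a full bipartite \emph{assignment} problem over all permutations, using the $O(n^{2+c})$ geometric matching algorithm of Agarwal--Efrat--Sharir, and resorts to an order-preserving dynamic program only in the regime where every circle moves less than $\varepsilon$ --- where order preservation is \emph{forced}, since consecutive final positions are $2$ apart and each center moves less than $\varepsilon$.

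Two further problems. First, your grid has $2/\delta=\Theta\bigl(n/(\varepsilon U)\bigr)$ candidates, which is $O(n/\varepsilon)$ only if $U=\Omega(1)$; when the circles start nearly in place, $U$ (and $\mathrm{OPT}$) can be arbitrarily small and the candidate count blows up. The paper avoids this with a fixed grid of step $\varepsilon^2$ and the case split: if some circle moves at least $\varepsilon$ then $\mathrm{OPT}\ge\varepsilon$ and an additive error of $\varepsilon^2$ is a relative error of $\varepsilon$; otherwise the separate exact DP applies. Second, by your own admission the claimed complexity $O(n^{2+c}/\varepsilon^2)$ rests on an unproven Monge/convexity property of $W$ as a function of $c_1$ (there is no reason to expect $W(\Delta)$ to be unimodal, since the set of circles falling near each slot changes discontinuously with $\Delta$), so the proposal does not establish the theorem even for the restricted OPC objective.
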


\begin{proof}
Since the adjacent circles in the cover touch each other, then the cover depends on the length $\Delta\in(0,2]$ of the first (leftmost) segment, which is covered by one circle. Knowing $\Delta$, it is possible to unambiguously divide the barrier into subsegments (let's call them the \emph{cells}) of length 2 (perhaps, except for the first and last cells). As an auxiliary problem, we can solve a generalized assignment problem (\ref{eq2}). To formulate this problem, we number the cells from left to right $j=1,\ldots,m$ and denote by $c_{ij}=d(p_i,\hat{p}_i)$ the distance from the initial position of the center of the circle $i$ to its final position in order to cover cell $j$. For each cell except possibly the first and last, the final positions of the centers of the circle will coincide with the midpoints of the cell. For the first and last cells, the points $\hat{p}_i$ are closest to $p_i$ lying on the segment or on the line containing the barrier and the corresponding cell is covered. Then the generalized assignment problem has the following form:

\begin{equation}\label{eq2}
\left\{
\begin{array}{ccc}
W(\Delta)=\min\limits_{x_{ij}\in\{0,1\}}\sum\limits_{i=1}^n\sum\limits_{j=1}^m c_{ij}x_{ij} \\
\sum\limits_{i=1}^n x_{ij}=1, j=1,\dots,m \\
\sum\limits_{j=1}^m x_{ij}\leq 1, i=1,\dots,n \\
 \end{array}
\right.
 \end{equation}

However, the value of $\Delta$ (the length of the first cell) can take any value from the interval $(0,2]$. We introduce a grid on the interval $[0,2]$ with a step $\varepsilon^2$ and for each $\Delta=\varepsilon^2 k$, $k=1,\ldots,2/\varepsilon^2$ solve the generalized assignment problem (\ref{eq2}).

Let $W(\Delta^*)=cost(\hat{p}^*)$ be the minimal value of the problem functional. If $\Delta^*$ is a multiple of $\varepsilon^2$, then the optimal solution to (\ref{eq2}) is the optimal solution to (\ref{eq:1}). Suppose that $\Delta^*$ is not a multiple of $\varepsilon^2$, that is, does not coincide with $\varepsilon^2 k$ for any $k\in\mathcal{N}$. Denote by $\overline{\Delta}=\varepsilon^2 k^*$ the closest value to $\Delta^*$ that is at least $\Delta^*$. Then

$$|\varepsilon^2 k^*-\Delta^*|=\min\limits_k|\varepsilon^2 k-\Delta^*|\leq\varepsilon^2.$$

Shift all the circles of the optimal coverage by an equal distance right so that the first circle covers the segment $[0,\overline{\Delta}]$ and touches the second circle at the point $\overline{\Delta}$ (see Fig. 1). Note that in the approximate solution constructed as a result of solving the problem (\ref{eq2}) with $\Delta=\overline{\Delta}$, the segment $[\overline{\Delta},L]$ is optimally covered. Then

$$W(\overline{\Delta}) - W(\Delta^*) \leq\varepsilon^2$$
and
$$\frac{W(\overline{\Delta}) - W(\Delta^*)}{W(\Delta^*)} \leq \frac{\varepsilon^2}{W(\Delta^*)}.$$

\begin{figure}
\centering
\includegraphics[bb= 0 3 400 200, clip, scale=0.45]{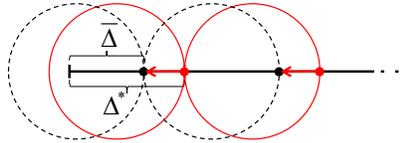}
\caption{Dotted black circles indicate an approximate solution, and red solid circles represent the optimal coverage.} \label{fig1}
\end{figure}

Two cases need to be considered:\\
(A) In the optimal coverage, at least one circle moves by at least $\varepsilon$ distance.\\
(B) In the optimal coverage, none of the circles moves a distance of $\varepsilon$ or more.

In the case (A), we have $W(\Delta^*)\geq\varepsilon$ and, therefore

$$\frac{W(\overline{\Delta}) - W(\Delta^*)}{W(\Delta^*)} \leq \frac{\varepsilon^2}{W(\Delta^*)}\leq\varepsilon.$$

In the case (B), circles horizontally move no more than a distance $\varepsilon$. This means that the centers of neighboring circles participating in the optimal coverage are initially located at a distance of at least $2-\varepsilon$ and at most $2+\varepsilon$. Thus, only the order preserving covering (OPC) can be optimal, and OPC can be constructed by the following dynamic programming algorithm $A_\varepsilon$.

\subsection{Algorithm $A_\varepsilon$}
Algorithm $A_\varepsilon$ should build a solution to problem (\ref{eq:1}), in which each circle moves a distance not exceeding the $\varepsilon$. Let $S_k(l)$ be the minimum total distance of displacements of the first $k$ circles participating in the covering of the segment $[0,l]$ containing of $k$ cells.

\subsubsection{Forward Recursion}
The initial position of the center of the first circle involved in the covering of the first cell of length $l\leq 2$ can only be inside the rectangle $R_1(l)$ with nodes in the plane points $(l-1-\varepsilon,0)$, $(l-1-\varepsilon,\varepsilon)$, $(l-1+\varepsilon,\varepsilon)$, $(l-1+\varepsilon,0)$. To cover the first segment $[0,l]$, one of the circles, whose center is in the rectangle $R_1(l)$, can move to the point $l-1$ (Fig. 2\emph{a}). As $l$ increases, rectangle $R_1(l)$ moves to the right, and centers of other circles will fall into it. These circles cannot be used to the right to cover the cell $k=2,3,\ldots$ (otherwise, there will be a movement greater than $\varepsilon$). Therefore, for each $l\leq 2$ the circle closest to point $l-1$, let it be the center of disk $N_1(l)$, is chosen, if it exists, and its movement determines $S_1(l)$. If the rectangle $R_1(l)$ contains no center of the circle, then $S_1(l)=+\infty$.

Consider the coverage of the segment $[0,l]$ which is split into $k\geq 2$ cells each of length 2 except maybe the first one. The centers of the circles that can cover the cell $k$ are inside the rectangle $R_k(l)$ with nodes in the points on the plane $(l-1-\varepsilon,0)$, $(l-1-\varepsilon,\varepsilon)$, $(l-1+\varepsilon,\varepsilon)$, $(l-1+\varepsilon,0)$. Let $Q_k(l)$ is the set of circle centers in the rectangle $R_k(l)$, and $d_i(l)$ is the distance from the vertex $i\in Q_k(l)$ to the point $l-1$ on the barrier. Then

$$
S_k(l)=\min\limits_{i\in Q_k(l)}d_i(l)+S_{k-1}(l-2)=d_{N_k(l)}(l)+S_{k-1}(l-2),
$$
where $l\in (0,L+2)$, $k=1,\ldots,K$ and the maximal number of cells is the integer $K\in [L/2,L/2+2)$. If $Q_k(l)$ is empty, then $S_k(l)=+\infty$.

The center of the last circle in the cover may be at the point of the straight line containing barrier belonging to the interval  $(L-1,L+1)$. This circle will cover the cell $[L-\delta,L]\subseteq[L-\delta,L-\delta +2]$, where $\delta\in [0,2)$. Since  $K\in [L/2,L/2+2)$ circles can participate in the coverage, to determine $cost(\hat{p}^*)$ it suffices to find

$$
S_k(l)==\min\limits_{K\in [L/2,L/2+2);\ l\in [L,L+2)}S_K(l)=S_{K^*}(l^*).
$$

\subsubsection{Backward Recursion}
During the backward recursion, the optimal OPC can be found using $N_k(l)$. After the forward recursion terminate, we know the optimal value of the objective function $cost(\hat{p}^*)$, the number of cells $K^*$, as well as the last cell $[L-\delta,L]$ and the the circle $N_{K^*}(l^*)$ that covers it. This means that the right border of the segment, namely $l^*-2$, which is covered by the first $K^*-1$ circles, is known. During the forward recursion, the best circle $N_{K^*-1}(l^*-2)$ was found to cover the cell $K^*-1$. It will cover a segment of length 2 and determine the right boundary of the barrier, which is covered with first $K^*-2$ circles. Continuing this process, we will build the desired cover.

\begin{definition}
If for any small $\delta>0$ we have that $S_k(l-\delta)=f_1(l-\delta)$, $S_k(l+\delta)=f_2(l+\delta)$ and $f_1(l)=f_2(l)$ or $S_k(l)=+\infty$, then $l$ is a \emph{switching point}.
\end{definition}

\begin{figure}
\centering
\includegraphics[bb= 0 0 700 300, clip, scale=0.5]{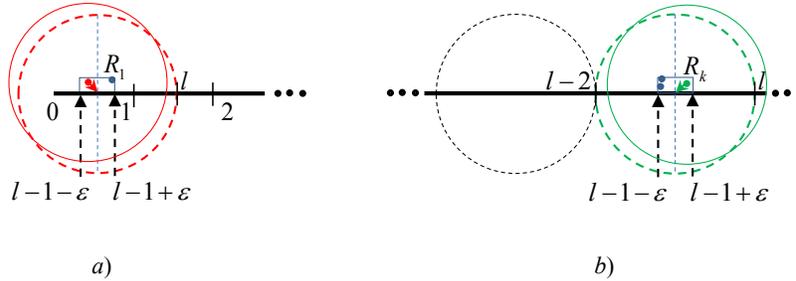}
\caption{$a$) Moving the first (solid red) circle participating in the covering to a new location (dotted circle); $b$) Moving an arbitrary circle (solid green) participating in the covering to a new location (dotted circle).} \label{fig1}
\end{figure}

Rectangle $R_k(l)$ may contain a finite set of circle centers. In any case, there are no more than $n$ of them there. With increasing $l$, the function determining the value of $S_k(l)$ changes a finite number of times in the switching points. The number of cells in the case of the existence of a solution is not more than $n$. Backward recursion is less time consuming. Therefore, the time complexity of building an optimal OPC by algorithm $A_\varepsilon$ is $O(n^2)$. The metric generalized assignment problem for each $\Delta$ is solved with time complexity $O(n^{2+c})$, where $c$ is arbitrarily small positive number \cite{Agarwal:06}. The number of the generalized assignment problems is limited to $2/\varepsilon^2$. As a result, the best of the solutions built by two algorithms (dynamic programming builds an optimal OPC when the length of movement of each circle does not exceed $\varepsilon$, and by solving assignment problems (\ref{eq2}) for different $\Delta$ another solution is found) has a relative error of no more than $\varepsilon$. The complexity of constructing the OPC is $O(n^2)$, and the complexity of solving the problem (\ref{eq2}) is $O(n^{2+c})$ \cite{Agarwal:06}. Then the total time complexity of FPTAS is $O(n^{n+c}/\varepsilon^2)$, QED.
\end{proof}

\section{Conclusion}

In this paper, we consider the problem of covering a straight line segment by arbitrary located equal circles on a plane by moving their centers to a segment or perhaps its continuation (for the leftmost and rightmost circles) in such a way that the neighboring circles involved in the covering touch each other and the total length of the circles movements would be minimal. The complexity status of this problem is not known. We propose a $O(n^{2+c}/\varepsilon^2)$--time FPTAS, where $n$ is the number of circles and $c>0$ is arbitrarily small real. As far as we know, this is the first FPTAS for the problem under consideration.

\section*{Acknowledgements}

The research is partly supported by the program of fundamental scientific researches of the SB RAS (project 0314--2019--0014) and by the Russian Foundation for Basic Research (Projects 17--51--45125).


\begin{thebibliography}{25}

\bibitem{Agarwal:06} Agarwal P.K., Efrat A., Sharir M.: Vertical decomposition of shallow levels in 3-dimensional arrangements and its applications. SIAM J. Comput., 29(3), pp. 912-–953 (2006)

\bibitem{Andrews:17} Andrews A.M., Wang H.: Minimizing the aggregate movements for interval coverage. Algorithmica, 78(1), pp. 47--85 (2017)

\bibitem{Bar:15} Bar-Noy A., Rawitz D. and Terlecky P.: ``Green'' barrier coverage with mobile sensors. In: Algorithms and Complexity: 9th International Conference, CIAC 2015, Paris, France. Proceedings, pp. 33--46. Springer International Publishing (2015)

\bibitem{Benk:15} Benkoczi R., Friggstad Z., Gaur D. and Mark Thom M.: Minimizing total sensor movement for barrier coverage by non-uniform sensors on a line.In Proc. of the 11th International Symposium on Algorithms and Experiments for Wireless Sensor Networks. LNCS vol 9536, pp.98--111 (2015)

\bibitem{Bhattacharya:08} Bhattacharya B.K., Burmester M., Hu Y., Kranakis E., Shi Q., and A. Wiese A.: Optimal movement of mobile sensors for barrier coverage of a planar region. Theor. Comput. Sci., 410(52), pp. 5515--5528 (2008)

\bibitem{Carmi:17} Carmi P., Katz M.J., Saban R., Stein Y.: Improved PTASs for convex barrier coverage. In: Solis-Oba R., Fleischer R. (eds) Approximation and Online Algorithms. WAOA 2017. LNCS 10787, pp. 26--40. Springer, Cham (2017)

\bibitem{Chen:07} Chen A., Kumar S., Lai T.H.: Designing localized algorithms for barrier coverage. In: Proceedings of the 13th annual ACM international conference on Mobile computing and networking, pp. 63--74 (2007)

\bibitem{Chen:08} Chen A., Lai T.H., Xuan D.: Measuring and guaranteeing quality of barrier-coverage in wireless sensor networks. In: Proceedings of the ACM International Symposium on Mobile Ad Hoc Networking and Computing (MobiHoc), pp. 421--430 (2008)

\bibitem{Cherry:17} Cherry A., Gudmundsson J.,  Mestre J.: Barrier coverage with uniform radii in 2D. In: Fernandez Anta A. et al. (eds.) ALGOSENSORS 2017, LNCS 10718, pp. 57--69 (2017)

\bibitem{Czyzowicz:10} Czyzowicz J. et al.: On Minimizing the sum of sensor movements for barrier coverage of a line segment. In: Nikolaidis I., Wu K. (eds) Ad-Hoc, Mobile and Wireless Networks. Ad Hoc, Mobile and Wireless Networks. LNSC 6288, pp. 29--42. Springer, Berlin, Heidelberg (2010)

\bibitem{Er:18} Erzin A., Lagutkina N.: Barrier coverage problem in 2D. In: S. Gilbert  et al. (Eds.): ALGOSENSORS 2018, LNCS 11410, pp. 118--130. Springer, Berlin, Heidelberg (2019)

\bibitem{ErLagIr:19} Erzin A., Lagutkina N., Ioramishvili N.: Barrier covering in 2D using mobile sensors with circular coverage areas. Arxiv (2019)

\bibitem{Fan:14} Fan H., Li M., Sun X., Wan P., Zhao Y.: Barrier coverage by sensors with adjustable ranges. ACM Transactions on Sensor Networks 11(1), Article No. 14. (2014)

\bibitem{Saipulla:10} Saipulla A., Westphal C., Liu B., Wang J.:  Barrier coverage with line-based deployed mobile sensors. Ad Hoc Networks, pp. 1381--1391 (2010)

\bibitem{Thom:17} Thom M.: Investigation on two classes of covering problems // PhD thesis, University of Lethbridge, Canada (2017)


\bibitem{Wu:17} Wu F., Gui Y., Wang Z., Gao X., Chen G.: A survey on barrier coverage with sensors. Front. Comput. Sci., 10(6), pp. 968--984 (2016)

\bibitem{Zhang:16} Zhang X.: Algorithms for barrier coverage with wireless sensors // PhD thesis, City University of Hong Kong, Hong Kong (2016)

\bibitem{Zhao:18} Zhao L., Bai G., Shen H., Tang Z.: Strong barrier coverage of directional sensor networks with mobile sensors. Int. J. of Distributed Sensor Networks, 14(2) (2018) DOI: 10.1177/1550147718761582.


\end{thebibliography}
\end{document}